\newtheorem{theorem}{Theorem}
\newtheorem{lemma}{Lemma}
\newtheorem{definition}{Definition}
\newtheorem{conjecture}{Conjecture}
\definecolor{orcidlogocol}{HTML}{A6CE39}
\newcommand{\orcidicon}{%
    \tikz[baseline=-0.5ex]\node[shape=circle,fill=orcidlogocol,inner sep=1pt] {\tiny\textsf{ID}};%
}
\newcommand{\orcid}[1]{\href{https://orcid.org/#1}{\orcidicon}}
\begin{document}

\begin{frontmatter}



\title{Correlation-Weighted Communicability Curvature as a Structural Driver of Dengue Spread: A Bayesian Spatial Analysis of Recife (2015–2024)}


\author[1]{Marcílio Ferreira dos Santos}
\ead{marcilio.santos@ufpe.br}
\ead[url]{https://orcid.org/0000-0001-8724-0899}

\author[2]{Cleiton de Lima Ricardo}
\ead{cleiton.lricardo@ufpe.br}
\ead[url]{https://orcid.org/0000-0002-7114-1201}

\author[3]{Andreza dos Santos Rodrigues de Melo}
\ead{andreza.rodrigues@ufpe.br}
\ead[url]{https://orcid.org/0000-0002-5153-6548}



\affiliation[1]{
    organization={Núcleo de Formação de Docentes, Universidade Federal de Pernambuco (UFPE)}, 
    addressline={}, 
    city={Caruaru},
    postcode={},
    state={PE},
    country={Brazil}
}

\affiliation[2]{
    organization={Núcleo Interdisciplinar de Ciências Exatas e da Natureza (NICEN), Universidade Federal de Pernambuco (UFPE)}, 
    addressline={}, 
    city={Caruaru},
    postcode={},
    state={PE},
    country={Brazil}
}

\affiliation[3]{
    organization={Departamento de Engenharia Cartográfica, Universidade Federal de Pernambuco (UFPE)}, 
    addressline={}, 
    city={Recife},
    postcode={},
    state={PE},
    country={Brazil}
}

\begin{abstract}
We investigate whether functional connectivity in urban road networks explains dengue incidence in Recife, Brazil (2015--2024), beyond traditional adjacency-based spatial dependence. For each neighborhood, we compute the average \emph{communicability curvature}, a graph-theoretic measure capturing multiscale accessibility through redundant network paths. The curvature metric is incorporated into Negative Binomial models, fixed-effects regressions, SAR/SAC spatial models, and a hierarchical INLA/BYM2 specification. Across all frameworks, curvature emerges as the strongest and most stable predictor of dengue risk. In the BYM2 model, the structured spatial component collapses ($\phi \approx 0$), indicating that spatial variation traditionally attributed to
CAR adjacency effects is largely absorbed by functional network connectivity. Rather than eliminating spatial dependence, the results suggest a reparametrization of space: dengue diffusion in Recife is structured less by geometric contiguity and more by network-mediated urban connectivity.
\end{abstract}

\begin{graphicalabstract}
\end{graphicalabstract}

\begin{highlights}
\item Communicability curvature captures multiscale functional connectivity in urban dengue networks
\item Curvature outperforms climatic, demographic, and classical spatial predictors
\item Adjacency-based spatial effects are reparameterized by functional network structure
\item The curvature effect is robust across spatial cutoffs and model specifications
\item Out-of-sample forecasting confirms genuine predictive, non-circular information
\end{highlights}

\begin{keyword}
Dengue, Spatial Epidemiology, Correlation-Weighted Communicability Curvature, Network Science, Bayesian Hierarchical Models, INLA
\end{keyword}

\end{frontmatter}




\section{Introduction}\label{sec:introduction}

Dengue remains one of the most significant vector-borne diseases worldwide, with
estimates indicating nearly 390 million infections annually and recurrent
outbreaks across tropical and subtropical regions. Its primary vector, the
\textit{Aedes aegypti} mosquito, exhibits complex ecological and epidemiological
dynamics that continue to challenge surveillance and control programs
\cite{barcellos2000, teixeira2009, honorio2009}. Despite extensive knowledge on
vector biology and environmental drivers, substantial uncertainty persists
regarding how human cases organize in space and time within urban environments,
and how such patterns can be translated into actionable indicators of epidemic
risk.

Classical epidemiological models, including compartmental frameworks such as SIR
and its extensions, are effective at capturing aggregate dynamics but often fail
to represent fine-scale spatial heterogeneity and localized transmission
processes \cite{keeling2008, carvalho2021, parselia2019}. In recent years,
graph-based and network-oriented approaches have emerged as a complementary
paradigm, allowing epidemic spread to be interpreted through patterns of
connectivity, clustering, and diffusion on complex structures
\cite{estrada2012, massaro2019}. However, the application of such methods to dengue
in dense urban settings remains limited, particularly with respect to integrating
functional connectivity beyond simple geographic adjacency.

The construction of biologically meaningful networks for dengue requires careful
consideration of spatial and temporal constraints. Empirical studies indicate
that the typical flight range of \textit{Aedes aegypti} lies between 100 and
200~m \cite{Moore2022}, while the extrinsic incubation period of the virus within
the mosquito ranges from approximately 6 to 14 days \cite{Pruszynski2024}. These
constraints support network formulations that capture either direct transmission
chains or co-infection clusters, in which multiple cases share common temporal
dynamics within restricted spatial windows \cite{simoy2015, zheng2019}.

In this study, we leverage high-resolution, street-level dengue surveillance data
from Recife, Brazil, where the geographic location of each reported case is
available. This enables the comparison of temporal incidence patterns across
urban units and the identification of functional relationships emerging from
their time series. To characterize the structural role of such relationships, we
employ communicability curvature \cite{estrada2012, estrada2012book}, a
graph-theoretic measure that quantifies the contribution of network elements to
multiscale connectivity and diffusion.

Rather than modeling individual transmission events, communicability curvature
acts as a structural descriptor of urban accessibility, summarizing how local
incidence patterns are embedded within the broader functional network. By
integrating this measure into statistical and spatial epidemiological models, we
aim to assess whether network-mediated connectivity provides explanatory and
predictive power beyond classical climatic, demographic, and adjacency-based
spatial effects.

\section{Graph-Based Modelling of Functional Transmission Topology}

In this section, we formalize the graph-based modelling framework adopted to
represent the spatiotemporal organization of dengue incidence. Graphs offer a
versatile representation because they encode both the geographic distribution
of cases and the temporal relationships underlying epidemic coevolution.
The construction follows established principles in network epidemiology and
serves as the structural basis for all analyses presented in this article.

\subsection{Local Time Series Model}

This model is closely aligned with formulations used in spatial epidemiology
and the study of complex systems
\cite{anselin1988, anselin1995, florax2003, lindgren2011, Massaro2019}.
Each vertex corresponds to a georeferenced street-level location associated
with a local dengue incidence time series, rather than individual cases.
Edges are established between vertices whose temporal similarity—here measured
by the Pearson correlation between their respective time series—exceeds a
predefined threshold, indicating epidemiological synchrony between distinct
urban locations.

Correlation is employed as a statistical proxy for temporal coevolution,
rather than as a direct representation of causal transmission.
Highly correlated series may reflect shared exposure to local vector
populations or indirect diffusion processes mediated by human mobility.

To illustrate the resulting structural connectivity,
Figure~\ref{fig:street_graph_overlay} shows the functional street-level graph
superimposed on the geographic map of Recife. This visualization highlights
the major channels of structural interaction induced by correlated dengue
activity and provides a geometric interpretation of the network structure
used throughout the study.

The epidemiological interpretation is twofold:
\begin{itemize}[label=--]
    \item[(i)] exposure to a common or overlapping source of infectious mosquitoes; or
    \item[(ii)] localized diffusion processes operating over time through
    spatially proximate areas.
\end{itemize}

Analogous correlation-based graph constructions have been widely explored in
recent epidemic studies, particularly during the COVID-19 pandemic, where
they were used to characterize systemic fragility, synchrony, and epidemic
risk \cite{carvalho2021, de2021using, chinazzi2020}.

Formally, this construction yields a graph $G = (V, A)$, where the vertex set
$V$ represents street-level locations and the edge set $A$ connects pairs of
vertices exhibiting statistically significant temporal synchrony.
Once constructed, the network topology is held fixed and subsequently analyzed
through structural and statistical models, ensuring that it functions as an
exogenous descriptor of urban connectivity.

To ensure biological plausibility, an additional spatial pruning criterion is
applied: only pairs of vertices separated by less than 600~m are connected.
This threshold corresponds to approximately three times the typical flight
range of \emph{Aedes aegypti} \cite{Moore2022}, accounting for geocoding
uncertainty and for the approximation inherent in representing streets by
centroids \cite{parselia2019, adeola2017}.

Among the candidate formulations evaluated, we focus on the local time series
model due to its conceptual alignment with established epidemiological
methodologies, its ability to capture fine-scale spatiotemporal synchrony,
and its direct applicability to routine municipal surveillance data.

\begin{figure*}[t]
    \centering
    \includegraphics[width=0.4\textwidth]{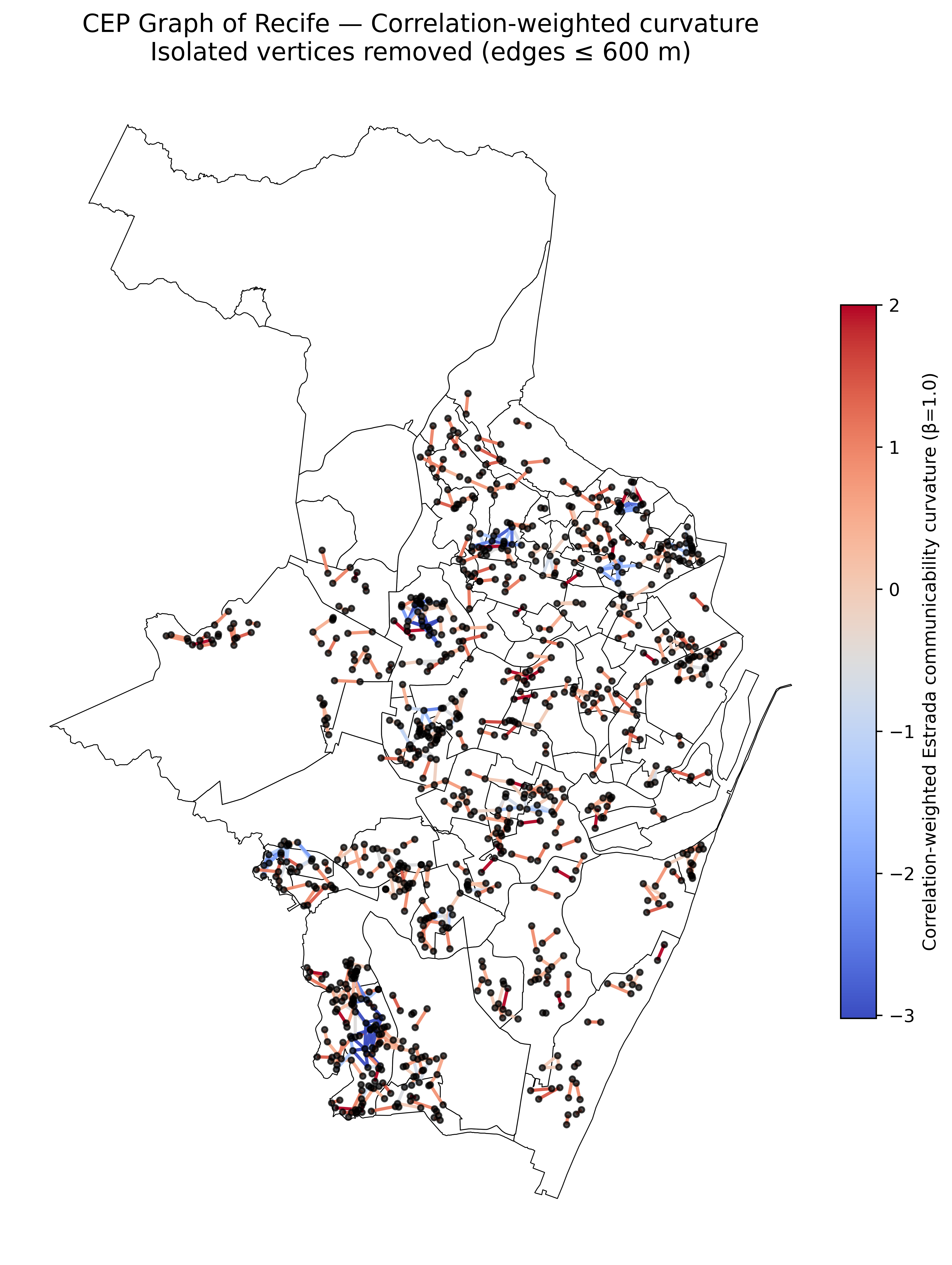}
    \caption{Street-level functional connectivity graph for Recife.
    Nodes represent street-level locations associated with local dengue
    incidence time series, and edges denote statistically significant
    temporal synchrony between series.
    The graph is superimposed on the geographic map to reveal the functional
    transmission topology and the spatial constraints shaping diffusive
    epidemic spread.}
    \label{fig:street_graph_overlay}
\end{figure*}

\section{Theoretical and Empirical Results}

In this section, we present the mathematical foundations supporting the use of
communicability curvature as a diffusion-sensitive descriptor in epidemiological
graphs, together with an empirical analysis based on dengue data from the
Metropolitan Region of Recife. The aim is to demonstrate, in an integrated
manner, both the formal consistency of the approach and its practical relevance
for monitoring and interpreting spatial patterns of transmission.

The analysis of mean curvature across neighborhoods reveals distinct spatial
configurations in the structure of dengue dissemination over the historical
series. Here, curvature is interpreted as an indicator of the structural
coherence of the functional transmission graph: more negative values correspond
to highly integrated network configurations that facilitate diffusion, whereas
higher values are associated with increased fragmentation and reduced
connectivity among nodes.

In 2015, we observe a predominance of negative or near-zero curvature across much
of the territory. This pattern is characteristic of a highly connected network
in which multiple potential epidemiological pathways coexist among different
regions of the city. Such a configuration is consistent with the intense and
spatially widespread outbreaks observed during that year. From a structural
perspective, the virus encountered few effective territorial barriers, forming
a functionally integrated urban mesh conducive to broad diffusion.

In 2019, the spatial pattern changes substantially: curvature values increase,
indicating a reduction in structural connectivity. This behavior aligns with the
marked decline in dengue cases recorded during that period. Higher curvature
suggests that transmission became more localized, with contagion dynamics
restricted to smaller subsets of neighborhoods. Structurally, 2019 corresponds
to a scenario of increased segmentation and greater resistance to large-scale
spatial propagation.

In 2024, we observe a partial reorganization of epidemiological connectivity.
Although the number of cases rises again, the spatial configuration does not
reproduce the highly integrated pattern observed in 2015. Curvature values
display greater heterogeneity, combining regions of positive curvature with
localized pockets of negative curvature. This configuration indicates a moderate
diffusion potential, concentrated in specific urban clusters. The network thus
recovers part of its connectivity, while remaining less integrated than during
the peak epidemic year.

A comparison across these three years reveals a consistent structural pattern:
(i) in 2015, the urban network exhibited high functional connectivity, favoring
widespread diffusion; (ii) in 2019, increased fragmentation substantially
limited transmission potential; and (iii) in 2024, the system assumed an
intermediate configuration characterized by localized outbreaks and moderate
connectivity. These results indicate that spatial curvature captures meaningful
changes in the structural organization of epidemiological risk, distinguishing
periods of heightened susceptibility to broad dissemination from those in which
transmission tends to remain spatially confined.

Thus, the maps presented do not merely reflect variations in case intensity;
they reveal how the spatial architecture of functional connectivity reorganizes
over time. Communicability curvature therefore emerges as a complementary tool
for geographically targeted epidemiological surveillance, supporting the
identification of structurally vulnerable areas and the design of more precise
intervention strategies.

\begin{figure}[ht]
    \centering
    \includegraphics[width=\linewidth]{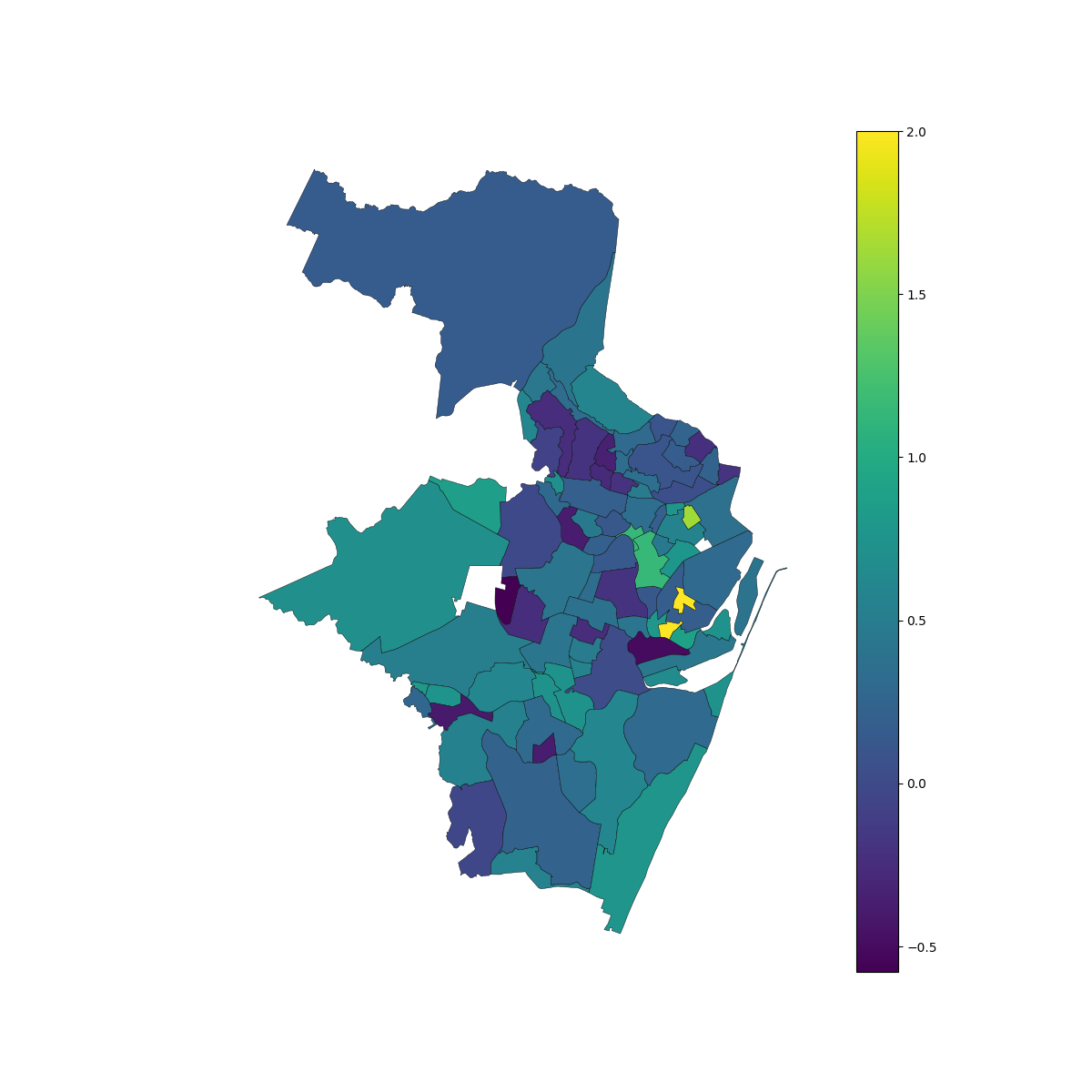}
    \caption{Mean communicability curvature by neighborhood in 2015.}
    \label{fig:curv2015}
\end{figure}

\begin{figure}[ht]
    \centering
    \includegraphics[width=\linewidth]{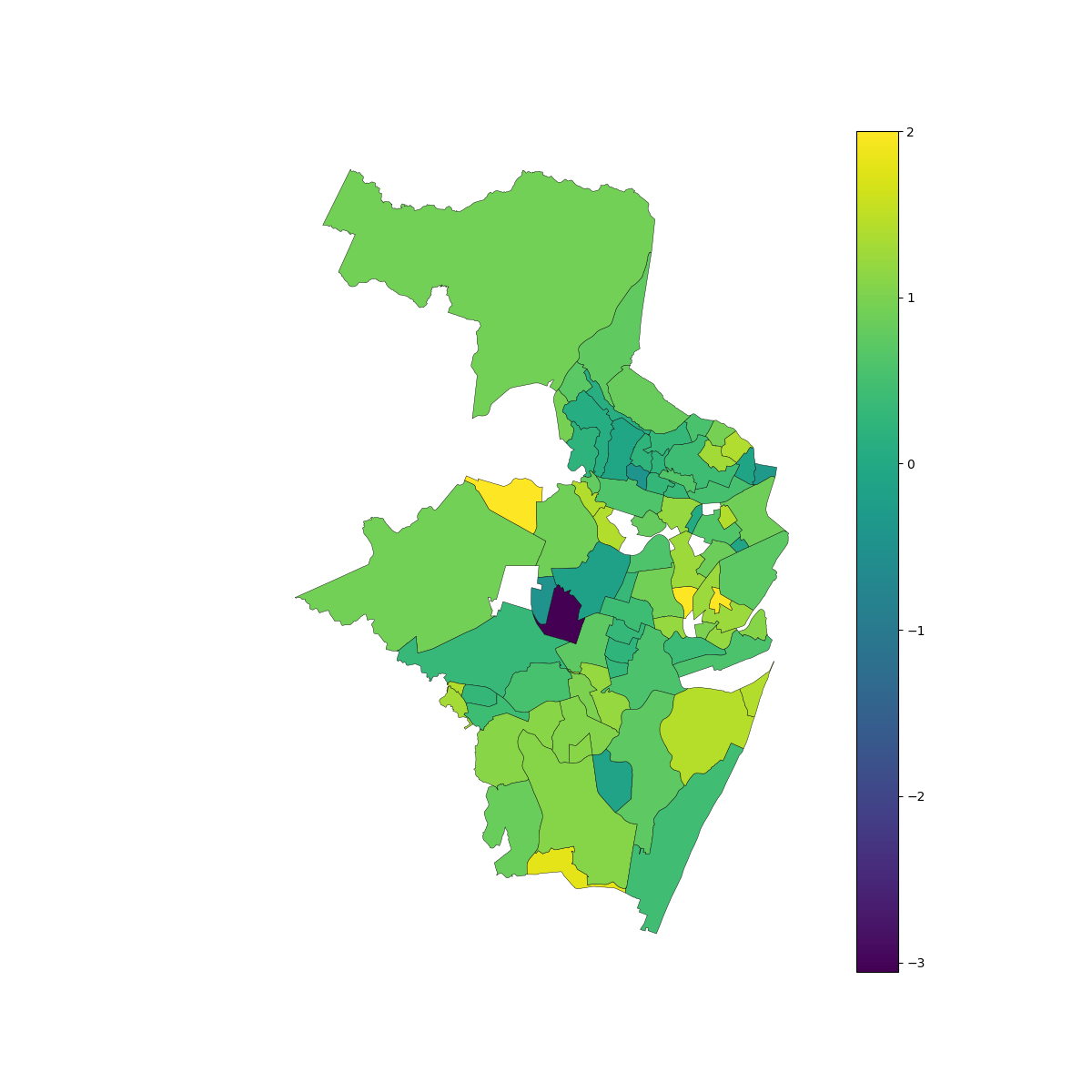}
    \caption{Mean communicability curvature by neighborhood in 2019.}
    \label{fig:curv2019}
\end{figure}

\begin{figure}[ht]
    \centering
    \includegraphics[width=\linewidth]{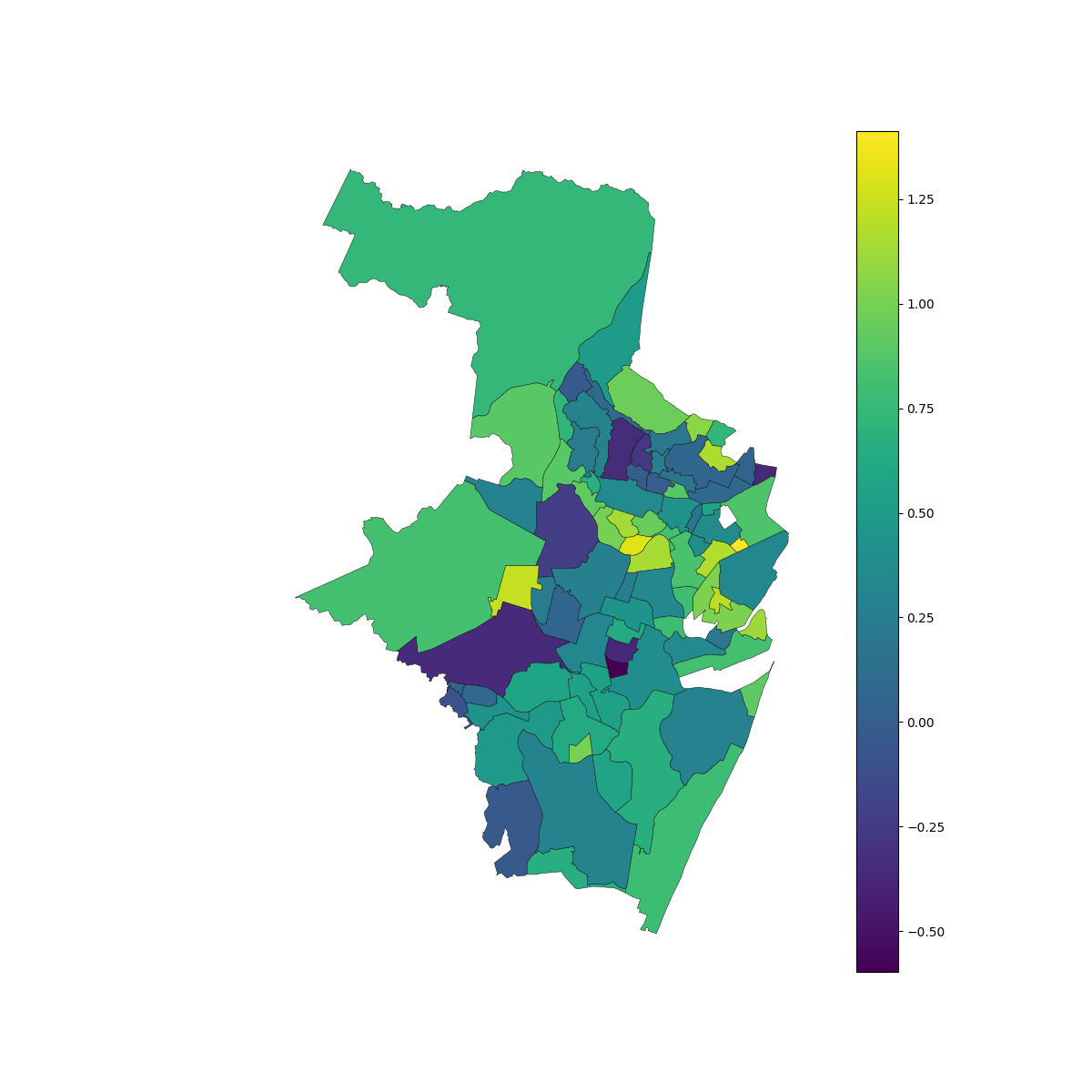}
    \caption{Mean communicability curvature by neighborhood in 2024.}
    \label{fig:curv2024}
\end{figure}

\subsection{Communicability and Multiscale Accessibility}

Let $G=(V,E)$ be a simple, undirected graph with nonnegative adjacency matrix
$A=[A_{ij}]$. The \emph{communicability} between vertices $i$ and $j$ is defined as
\begin{equation}
C_{ij}(\beta) = \bigl(e^{\beta A}\bigr)_{ij},
\end{equation}
where the matrix exponential admits the expansion
\begin{equation}
e^{\beta A} = \sum_{k=0}^{\infty} \frac{\beta^k}{k!} A^k.
\end{equation}

Since $(A^k)_{ij}$ counts the number of walks of length $k$ between vertices $i$
and $j$, communicability aggregates all possible transmission routes, weighting
longer paths by exponentially decaying factors. The parameter $\beta>0$ controls
the scale of interaction: small values emphasize local structure, while larger
values incorporate increasingly global pathways.

The series converges absolutely for all $\beta>0$, since
\[
\lim_{k\to\infty} \frac{\beta^k}{k!} = 0,
\]
ensuring that very long walks contribute negligibly.

\subsubsection{Diffusive Interpretation and Monotonicity}

\begin{lemma}[Monotonicity in the Diffusion Parameter]
\label{lem:monotonicity}
For any vertices $i,j\in V$ and parameters $0<\beta_1<\beta_2$,
\[
C_{ij}(\beta_1) \le C_{ij}(\beta_2).
\]
\end{lemma}

\begin{proof}
From the series expansion,
\[
C_{ij}(\beta) = \sum_{k=0}^{\infty} \frac{\beta^k}{k!} (A^k)_{ij},
\]
all coefficients and summands are nonnegative and increase monotonically in
$\beta$.
\end{proof}

The monotonicity of $C_{ij}(\beta)$ reflects the intensification of diffusion:
increasing $\beta$ enhances the contribution of longer transmission routes,
amplifying the effective connectivity of the network.

\begin{theorem}[Communicability as Linear Diffusion]
\label{thm:linear-diffusion}
Consider the linear system
\[
\frac{dX(t)}{dt} = \beta A X(t), \qquad X(0)=e_i.
\]
Its solution is
\[
X(t) = e^{\beta t A} e_i,
\]
and the influence of vertex $i$ on vertex $j$ at time $t$ is proportional to
$C_{ij}(\beta t)$.
\end{theorem}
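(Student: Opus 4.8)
The plan is to treat this as a standard linear initial value problem and verify that the proposed closed form is its unique solution, then read off the $j$-th coordinate. First I would recall that the matrix exponential $e^{\beta t A}=\sum_{k\ge 0}\frac{(\beta t)^k}{k!}A^k$ converges absolutely for every $t$ (this was already noted in the discussion preceding Proposition~1), which licenses termwise differentiation with respect to $t$. Differentiating gives $\frac{d}{dt}e^{\beta t A}=\sum_{k\ge 1}\frac{\beta^k t^{k-1}}{(k-1)!}A^k=\beta A\,e^{\beta t A}=e^{\beta t A}\beta A$, so $X(t)=e^{\beta t A}e_i$ satisfies $\dot X(t)=\beta A X(t)$, and $X(0)=e^{0}e_i=I e_i=e_i$, matching the initial condition.

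Next I would invoke uniqueness. Since the vector field $X\mapsto \beta A X$ is globally Lipschitz (it is linear with operator norm bounded by $\beta\|A\|$), the Picard–Lindelöf theorem guarantees that the IVP has exactly one solution on all of $\mathbb{R}$; hence $X(t)=e^{\beta t A}e_i$ is \emph{the} solution, not merely \emph{a} solution. Alternatively one can argue directly: if $Y$ is any solution, then $\frac{d}{dt}\big(e^{-\beta t A}Y(t)\big)=e^{-\beta t A}(\dot Y-\beta A Y)=0$, so $e^{-\beta t A}Y(t)$ is constant and equals $Y(0)=e_i$, giving $Y(t)=e^{\beta t A}e_i$.

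Finally I would extract the coordinate of interest. The influence deposited on vertex $j$ is the $j$-th entry $X_j(t)=e_j^{\top}X(t)=e_j^{\top}e^{\beta t A}e_i=\big(e^{\beta t A}\big)_{ji}$. Because $G$ is undirected, $A$ is symmetric, so $e^{\beta t A}$ is symmetric and $\big(e^{\beta t A}\big)_{ji}=\big(e^{\beta t A}\big)_{ij}=C_{ij}(\beta t)$ by the definition of communicability. Thus the expected influence on $j$ at time $t$ equals $C_{ij}(\beta t)$ (proportionality with constant $1$); when the initial mass at $i$ is a scalar $c$ rather than a unit impulse, linearity scales this to $c\,C_{ij}(\beta t)$, which is the asserted proportionality.

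The mathematical content here is routine; the only genuine care-points are (a) justifying termwise differentiation of the exponential series, which follows from the absolute convergence already established, and (b) the symmetry step $C_{ji}=C_{ij}$, which relies on $G$ being undirected. The phrase ``expected influence'' is interpretive rather than technical: we are simply declaring $X_j(t)$ to be that quantity, so the main ``obstacle'' is stylistic — making clear that the result is an exact identity at the level of the linearized dynamics, with the epidemiological reading (diffusion of infectious pressure along all walks, discounted by walk length via $\beta$) layered on top.
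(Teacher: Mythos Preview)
Your proof is correct and follows the same approach as the paper, which simply states that the result ``follows directly from the standard solution of a linear differential equation via the matrix exponential.'' You have supplied considerably more detail than the paper itself---the termwise differentiation, the uniqueness argument, and the symmetry step $(e^{\beta t A})_{ji}=(e^{\beta t A})_{ij}$---but the underlying idea is identical.
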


\begin{proof}
This follows directly from the standard solution of linear differential equations
via the matrix exponential.
\end{proof}

This result establishes communicability as a natural approximation of linearized
diffusive processes on networks, including epidemic propagation in early or
locally linear regimes.

\subsubsection{Mean Communicability and Spectral Structure}

Let $G_\tau$ denote the graph obtained by thresholding correlations at level
$\tau$, with adjacency matrix $A_\tau$ and eigenvalues
$\lambda_1(\tau)\ge\cdots\ge\lambda_n(\tau)$.  
The \emph{mean communicability} is defined as
\begin{equation}
\bar{C}(\tau)
= \frac{1}{|V|^2} \sum_{i,j\in V} \bigl(e^{\beta A_\tau}\bigr)_{ij}.
\end{equation}

\begin{lemma}[Spectral Representation of Mean Communicability]
\label{lem:spectral}
\[
\bar{C}(\tau)
= \frac{1}{|V|^2} \sum_{k=1}^n e^{\beta \lambda_k(\tau)}.
\]
\end{lemma}

\begin{proof}
By the spectral decomposition $A_\tau = Q_\tau \Lambda_\tau Q_\tau^\top$,
\[
e^{\beta A_\tau} = Q_\tau e^{\beta \Lambda_\tau} Q_\tau^\top,
\]
and the trace satisfies
\[
\mathrm{tr}\!\left(e^{\beta A_\tau}\right)
= \sum_{k=1}^n e^{\beta \lambda_k(\tau)}.
\]
Normalization by $|V|^2$ yields the result.
\end{proof}

This representation highlights the role of spectral diversity in shaping
communicability: balanced eigenvalue contributions enhance multiscale diffusion,
while spectral dominance concentrates influence along few modes.

\subsubsection{Critical Maximization of Communicability}

\begin{conjecture}[Critical Maximization of Mean Communicability]
\label{conj:critical}
There exists a threshold $\tau^\ast$, located in the percolation critical window
of the family $\{G_\tau\}$, such that the mean communicability
\[
\bar{C}(\tau)
\]
is maximized at $\tau=\tau^\ast$.
\end{conjecture}

\paragraph{Mathematical intuition}
As $\tau$ decreases, edges are added monotonically to the graph family
$\{G_\tau\}$.

\emph{Subcritical regime:} the network remains fragmented, all eigenvalues are
small, and $\bar{C}(\tau)$ is limited by the absence of large-scale connectivity.

\emph{Supercritical dense regime:} the leading eigenvalue $\lambda_1(\tau)$ grows
rapidly, while the remaining spectrum becomes increasingly concentrated. This
spectral dominance or effective rank collapse, a classical phenomenon in dense
graph processes \cite{chung2003,furedi1981}, reduces the marginal contribution of
additional edges to the normalized sum $\sum_k e^{\beta \lambda_k(\tau)}$.

\emph{Critical regime:} near the percolation threshold, the emergence of a giant
component induces rapid growth of $\lambda_1(\tau)$ while preserving spectral
diversity across multiple modes. This balance maximizes
$\sum_k e^{\beta \lambda_k(\tau)}$ and hence $\bar{C}(\tau)$.

\paragraph{Epidemiological interpretation.}
At the percolation threshold, the network balances fragmentation and redundancy,
maximizing the diversity of alternative transmission pathways. In this regime,
communicability is most sensitive to structural bottlenecks and preferential
routes for epidemic spread.

\paragraph{Remark.}
A formal proof of Conjecture~\ref{conj:critical}, including precise assumptions on
the correlation thresholding process and spectral concentration rates, is left
for future work. The present argument is intended to provide theoretical
intuition supporting the empirical findings reported in the subsequent sections.

\subsection{Correlation-Weighted Communicability Curvature}
\label{subsec:ccc}

Communicability curvature identifies structural bottlenecks by quantifying the importance of each edge to the global connectivity of the graph. However, its classical formulation depends solely on the topological structure and does not incorporate dynamic information from the local time series that reflect the evolution of the epidemic.

To integrate these two dimensions---structure and dynamics---we propose a modified metric that combines communicability with temporal correlation between vertices. Let $C_{ij}(\beta)$ denote the communicability between vertices $i$ and $j$,

$$C_{ij}(\beta) = \big(e^{\beta A}\big)_{ij},$$

and let $w_{ij}\in[0,1]$ be the weight associated with the correlation between the local incidence time series at $i$ and $j$.

\begin{definition}[Correlation-Weighted Communicability Curvature]
We define the correlation-weighted communicability curvature as
\[
\kappa_{ij} = C_{ij}(\beta)\, w_{ij}.
\]
\end{definition}

The metric $\kappa_{ij}$ highlights edges that simultaneously exhibit  
(i) high structural importance---as measured by communicability; and  
(ii) strong epidemiological synchrony---as captured by the correlation weights $w_{ij}$.

Unlike classical curvature, which relies solely on graph topology, the proposed metric incorporates dynamic information directly tied to transmission processes. In urban epidemiological settings, $\kappa_{ij}$ identifies connections that not only support the diffusive structure of the graph but also display strong temporal association, representing potential preferential corridors of dengue dissemination.

Thus, $\kappa_{ij}$ is conceptually related to notions such as effective transmissibility and local risk flow, providing a rigorous bridge between the topology of complex networks and spatiotemporal incidence patterns.

\section{Empirical Results}

In this section, we evaluate whether the mean communicability curvature of road-network graphs is a robust determinant of annual dengue incidence in the Metropolitan Region of Recife.  
Several models were estimated with complementary objectives:
\begin{enumerate}
    \item to establish an epidemiologically adequate baseline model,
    \item to control for unobserved heterogeneity across neighborhoods,
    \item to capture residual spatial dependence,
    \item to validate robustness via modern Bayesian modeling.
\end{enumerate}

Only the essential results are summarized below.

\subsection{Negative Binomial Model}

The Negative Binomial model showed excellent fit and confirmed that  
\textbf{mean curvature is a strong negative predictor of dengue incidence}.  
The main findings are:

\begin{enumerate}
    \item curvature coefficient: $\beta = -1.07$ \, ($p<0.001$);
    \item effect size: a 1-unit increase in curvature reduces expected incidence  
          by approximately $65\%$;
    \item the effect remains stable after controlling for environmental and sociodemographic variables.
\end{enumerate}

The Negative Binomial model serves as the ``baseline model,’’ demonstrating that the relationship between curvature and epidemiological risk is structural and does not depend on the inclusion of explicit spatial terms.

\subsection{Spatial and Temporal Fixed Effects}

We estimated a linear model with fixed effects for neighborhood and year, absorbing spatial heterogeneity and seasonal epidemic cycles.

Results show that:

\begin{enumerate}
    \item curvature remains statistically significant ($p<0.01$);
    \item the model explains approximately $R^2 \approx 0.57$;
    \item fixed effects capture persistent differences between neighborhoods and inter-annual epidemic cycles.
\end{enumerate}

Even after controlling for time-invariant neighborhood characteristics, the effect of road-network structure remains robust.

As shown in \cite{FerreiraMelo2025}, hierarchical Bayesian models can identify, with high accuracy, the neighborhoods that consistently lead dengue incidence over the years. Incorporating correlation-weighted curvature deepens this ability by capturing structural properties of the urban network that are not reflected in traditional socioenvironmental covariates.  
Thus, including this metric yields more stable predictions, reduced spatial uncertainty, and more informative risk maps, particularly valuable for targeted public-health interventions and resource allocation.

\subsection{Classical Spatial Models (SAR and SAC)}

Classical spatial econometric models—the \textit{Spatial Autoregressive Model}
(SAR) and the \textit{Spatial Autoregressive Combined model} (SAC)—were fitted as
an intermediate diagnostic step to assess whether any residual spatial dependence
remains after explicitly accounting for functional network connectivity through
communicability-based covariates \cite{lesage2009, elhorst2014}. These models were
not intended as the primary inferential framework, but rather as a benchmark
against which the adequacy of adjacency-based spatial operators could be
evaluated.

All specifications employed a Queen contiguity matrix derived from official IBGE
neighborhood geometries. Under this representation, the spatial system is
connected and exhibits a high average number of neighbors, satisfying the
standard assumptions required for SAR-type diffusion processes. Importantly, the
interpretation of SAR and SAC results in this study is strictly diagnostic: given
the known limitations of binary adjacency operators in heterogeneous urban
systems, these models are used to test whether residual dependence remains once
functional connectivity is introduced, rather than to serve as standalone
mechanistic explanations.

To assess robustness with respect to spatial scale, the full SAR and SAC
specifications were estimated using two alternative constructions of the
communicability-based curvature covariate: a local functional network restricted
to a 600 m spatial cutoff and a broader meso-urban network using an 800 m cutoff.
The latter explicitly incorporates a larger share of geographic contiguity while
preserving the correlation-based structure of the functional graph.

\subsubsection*{Results for the 600 m functional connectivity scale}

Table~\ref{tab:sar_sac_600} reports the results obtained when communicability
curvature is computed using a 600 m spatial cutoff. At this scale, curvature
emerges as the dominant explanatory variable in both SAR and SAC models, with a
large and highly significant negative coefficient. Environmental, seasonal, and
socioeconomic covariates retain stable and interpretable effects, consistent with
the results obtained from non-spatial and hierarchical Bayesian specifications.

Crucially, the spatial lag term $Wy$ is not statistically significant in either
model. This indicates that, once functional connectivity is explicitly modeled,
there is little remaining variation to be explained by adjacency-based
autocorrelation.

\begin{table}
\centering
\small
\caption{SAR and SAC results using communicability curvature constructed with a
600 m spatial cutoff.}
\label{tab:sar_sac_600}
\begin{tabular}{lcc}
\hline
\textbf{Variable} & \textbf{SAR (coef.)} & \textbf{SAC (coef.)} \\
\hline
Curvature (600 m)            & $-55.55^{***}$ & $-55.37^{***}$ \\
Year mean precipitation     & $-0.049^{***}$ & $-0.050^{***}$ \\
Year mean temperature       & $+17.56^{*}$   & $+17.67^{*}$   \\
Precipitation (lag 3)       & $+0.440^{***}$ & $+0.441^{***}$ \\
Rainy season                & $+28.21^{***}$ & $+28.37^{***}$ \\
Channel proportion          & $-51.72^{**}$  & $-52.76^{**}$  \\
Slum proportion             & $-30.58^{*}$   & $-30.68^{*}$   \\
Population density          & $+0.0047^{***}$& $+0.0044^{***}$\\
Richest income share        & $-26.69^{**}$  & $-24.85^{*}$   \\
Total population            & $-0.0004^{*}$  & $-0.0004^{*}$  \\
Spatial lag $Wy$             & n.s.           & n.s.           \\
\hline
Pseudo-$R^{2}$               & $0.343$        & $0.343$        \\
\hline
\multicolumn{3}{l}{\footnotesize{$^{***}p<0.001$, $^{**}p<0.01$, $^{*}p<0.05$; n.s.: not significant}}
\end{tabular}
\end{table}

\subsubsection*{Results for the 800 m functional connectivity scale}

To evaluate sensitivity to the spatial cutoff, the same models were re-estimated
using communicability curvature computed with an 800 m distance threshold. This
scale introduces broader meso-urban interactions and partially reintroduces
geographic contiguity effects, thereby providing a stringent test of whether the
weak performance of the adjacency-based spatial lag is an artifact of overly
local network construction.

The results, summarized in Table~\ref{tab:sar_sac_800}, closely mirror those
obtained at the 600 m scale. The curvature coefficient remains remarkably stable
in magnitude and statistical significance across both SAR and SAC models. All
major covariates preserve their signs and relative importance, indicating that
the core findings are not sensitive to the specific choice of spatial cutoff.

Notably, even under this broader spatial scale, the spatial lag term $Wy$ remains
statistically insignificant. This demonstrates that the limited explanatory power
of adjacency-based spatial autocorrelation is not due to insufficient spatial
reach, but rather reflects the fact that the relevant form of spatial dependence
is already encoded in the functional connectivity structure captured by the
communicability curvature.

\begin{table}
\centering
\small
\caption{SAR and SAC results using communicability curvature constructed with an
800 m spatial cutoff.}
\label{tab:sar_sac_800}
\begin{tabular}{lcc}
\hline
\textbf{Variable} & \textbf{SAR (coef.)} & \textbf{SAC (coef.)} \\
\hline
Curvature (800 m)            & $-55.16^{***}$ & $-55.60^{***}$ \\
Year mean precipitation     & $-0.045^{***}$ & $-0.045^{***}$ \\
Precipitation (lag 2)       & $-0.255^{*}$   & $-0.258^{*}$   \\
Precipitation (lag 3)       & $+0.617^{***}$ & $+0.621^{***}$ \\
Rainy season                & $+28.21^{***}$ & $+28.28^{***}$ \\
Channel proportion          & $-64.44^{**}$  & $-64.79^{**}$  \\
Slum proportion             & $-46.47^{**}$  & $-47.92^{**}$  \\
Population density          & $+0.0055^{***}$& $+0.0056^{***}$\\
Income up to 2 MW           & $+14.25^{*}$   & n.s.           \\
Richest income share        & $-34.08^{**}$  & $-32.94^{**}$  \\
Total population            & $-0.0006^{**}$ & $-0.0006^{**}$ \\
Spatial lag $Wy$             & n.s.           & n.s.           \\
\hline
Pseudo-$R^{2}$               & $0.360$        & $0.360$        \\
\hline
\multicolumn{3}{l}{\footnotesize{$^{***}p<0.001$, $^{**}p<0.01$, $^{*}p<0.05$; n.s.: not significant}}
\end{tabular}
\end{table}

\subsubsection*{Interpretation}

Taken together, the results across both spatial scales provide strong evidence
that spatial dependence in dengue incidence does not primarily operate through
local adjacency-based autocorrelation. Instead, spatial structure manifests
through functional connectivity patterns that integrate geographic proximity,
temporal co-movement, and multiscale network pathways.

The persistence of a null $Wy$ coefficient under both local (600 m) and meso-urban
(800 m) scales confirms that the role of adjacency-based diffusion is secondary
once functional network effects are explicitly modeled. Importantly, this does
not imply that space is irrelevant; rather, it demonstrates that the relevant
spatial mechanism is captured more effectively by communicability-based metrics
than by binary contiguity operators.

These findings motivate the transition to hierarchical Bayesian models (BYM2) and
continuous spatial representations (SPDE), which allow residual spatial
heterogeneity to be modeled flexibly without imposing restrictive assumptions of
uniform adjacency-driven diffusion.

\subsection{INLA/BYM2 Model: Latent Spatial Structure and Robustness (600 m Cutoff)}

As the final stage of the explanatory analysis, we estimated a hierarchical spatial
model of the Besag–York–Mollié family in its \texttt{BYM2} parametrization, using the
Integrated Nested Laplace Approximation (INLA) \cite{rue2009}. This
model decomposes residual variation into a \textit{structured} spatial component,
associated with geographic adjacency, and an \textit{unstructured} component,
capturing local heterogeneity not explained by covariates.

The objectives of this subsection are:
\begin{itemize}[label=--]
    \item to assess the robustness of effects estimated in classical Negative Binomial
          and SAR/SAC models;
    \item to evaluate the behavior of communicability curvature computed with a
          600~m cutoff in a hierarchical spatial framework;
    \item to characterize residual spatial dependence through BYM2 hyperparameters.
\end{itemize}

\subsubsection*{Fixed Effects}

Table~\ref{tab:inla_fixed_effects_600} presents posterior summaries of the fixed
effects estimated from the full dataset. Coefficients correspond to posterior means
and 95\% credible intervals.

\begin{table*}[!htbp]
\centering
\caption{Fixed effects from the INLA/BYM2 model using communicability curvature with
a 600~m cutoff. Posterior means and 95\% credible intervals.}
\label{tab:inla_fixed_effects_600}
\small
\begin{tabular}{lrrrr}
\hline
\textbf{Variable} & \textbf{Mean} & \textbf{SD} & \textbf{2.5\%} & \textbf{97.5\%} \\
\hline
Intercept                    & 208.19 & 14.24 & 180.27 & 236.11 \\
Year (numeric)               & -0.103 & 0.007 & -0.117 & -0.089 \\
Communicability curvature    & -0.289 & 0.041 & -0.368 & -0.210 \\
Mean annual precipitation    & $1.58\times10^{-4}$ & $1.05\times10^{-4}$ 
                             & $-4.8\times10^{-5}$ & $3.64\times10^{-4}$ \\
Precipitation (lag 3)        & 0.00141 & 0.00071 & $1.8\times10^{-5}$ & 0.00279 \\
Rainy season indicator       & 0.501 & 0.030 & 0.442 & 0.560 \\
Slum proportion               & -0.0012 & 0.179 & -0.351 & 0.351 \\
Population density           & $2.05\times10^{-6}$ & $5.75\times10^{-6}$ 
                             & $-9.1\times10^{-6}$ & $1.35\times10^{-5}$ \\
Richest income share         & -0.406 & 0.103 & -0.611 & -0.207 \\
Total population             & $2.79\times10^{-5}$ & $2.07\times10^{-6}$ 
                             & $2.38\times10^{-5}$ & $3.20\times10^{-5}$ \\
\hline
\end{tabular}
\end{table*}

Communicability curvature remains strongly negative and statistically significant,
even after controlling for long-term temporal trends, climatic variability, and
socioeconomic covariates. This result confirms that curvature captures a persistent
structural feature of the urban network rather than a transient temporal effect.

The negative coefficient associated with the numeric year variable reflects a
long-term downward trend in dengue incidence, while seasonal and lagged precipitation
effects retain biologically coherent signs and magnitudes.

\subsubsection*{Spatial Hyperparameters}

Table~\ref{tab:inla_hyperpar_600} reports the posterior summaries of the BYM2
hyperparameters.

\begin{table*}[!htbp]
\centering
\caption{BYM2 hyperparameters estimated using INLA (600~m curvature).}
\label{tab:inla_hyperpar_600}
\small
\begin{tabular}{lccccc}
\hline
\textbf{Hyperparameter} & \textbf{Mean} & \textbf{SD} & \textbf{2.5\%} & \textbf{Median} & \textbf{97.5\%} \\
\hline
NB overdispersion (size) 
    & 1.99 & 0.09 & 1.82 & 1.99 & 2.17 \\[2pt]

Precision (BYM2 latent field) 
    & 26.43 & 11.35 & 12.06 & 23.97 & 55.58 \\[2pt]

$\phi$ (structured spatial share) 
    & 0.12 & 0.14 & 0.005 & 0.069 & 0.552 \\
\hline
\end{tabular}
\end{table*}

\paragraph{(i) Weak but non-negligible CAR component}

The mixing parameter $\phi$ measures the proportion of residual spatial variance
explained by the structured CAR component. The posterior distribution is concentrated
near zero, with a mean of approximately $\phi \approx 0.12$, indicating that only a
limited fraction of residual spatial dependence is attributable to geographic
adjacency.

\paragraph{(ii) Dominance of functional connectivity}

The weak CAR contribution suggests that most spatial structure is captured by fixed
effects, particularly communicability curvature. This indicates that dengue diffusion
in Recife is primarily governed by \emph{functional connectivity} encoded in the road
network, rather than by simple geometric contiguity between neighborhoods.

\subsubsection*{Model Fit}

The model exhibits strong overall fit:
\[
\text{DIC} = 11158.91, \qquad \text{WAIC} = 11159.34,
\]
consistent with a specification that captures both temporal dynamics and spatial
heterogeneity.

\subsubsection*{Interpretive Synthesis}

The INLA/BYM2 results using a 600~m curvature cutoff provide robust Bayesian evidence
that:
\begin{itemize}[label=--]
    \item communicability curvature is a stable and structurally meaningful predictor
          of dengue incidence;
    \item long-term temporal trends and seasonal climatic drivers remain important but
          do not subsume network effects;
    \item residual spatial dependence associated with adjacency is weak once
          functional connectivity is modeled;
    \item hierarchical Bayesian inference confirms and refines conclusions obtained
          from classical spatial regressions.
\end{itemize}

Together, these findings reinforce the interpretation that dengue diffusion in Recife
is structured primarily by the geometry of the urban mobility network, with geographic
adjacency playing a secondary role.

\subsection{Out-of-Sample Prediction: Training up to 2023 and Forecasting 2024}

To assess whether communicability curvature contains genuinely forward-looking
information—rather than merely re-encoding contemporaneous dengue patterns—we
conducted a strict out-of-sample prediction exercise. The hierarchical
INLA/BYM2 model was trained using data from 2015 to 2023 and subsequently employed
to forecast dengue incidence for 2024. Importantly, no observations from 2024 were
used during model estimation, ensuring a clear separation between training and
evaluation stages.

Table~\ref{tab:inla_predictive_2023} reports posterior summaries of the predictive
model fitted exclusively on pre-2024 data. All coefficients correspond to posterior
means and 95\% credible intervals.

\begin{table*}[!htbp]
\centering
\caption{Posterior summaries of the INLA/BYM2 predictive model trained up to 2023.
Coefficients represent posterior means and 95\% credible intervals.}
\label{tab:inla_predictive_2023}
\small
\begin{tabular}{lrrrr}
\hline
\textbf{Parameter} & \textbf{Mean} & \textbf{SD} & \textbf{2.5\%} & \textbf{97.5\%} \\
\hline
\multicolumn{5}{l}{\textit{Fixed effects}} \\
\hline
Intercept                    & 0.911 & 0.201 & 0.518 & 1.307 \\
Communicability curvature    & -0.389 & 0.047 & -0.481 & -0.297 \\
Precipitation (lag 3)        & 0.00093 & 0.00058 & -0.00021 & 0.00207 \\
Rainy season indicator       & 0.666 & 0.028 & 0.612 & 0.720 \\
Population density           & $1.8\times10^{-6}$ & $1.0\times10^{-5}$ & $-1.8\times10^{-5}$ & $2.2\times10^{-5}$ \\
\hline
\multicolumn{5}{l}{\textit{Spatial and distributional hyperparameters}} \\
\hline
Overdispersion (NB size)     & 1.75 & 0.08 & 1.59 & 1.91 \\
Spatial precision (BYM2)     & 4.85 & 1.08 & 3.10 & 7.31 \\
$\phi$ (structured share)    & 0.40 & 0.18 & 0.10 & 0.77 \\
\hline
\multicolumn{5}{l}{\textit{Model fit}} \\
\hline
DIC                          & \multicolumn{4}{c}{9889.4} \\
WAIC                         & \multicolumn{4}{c}{9907.2} \\
\hline
\end{tabular}
\end{table*}

\subsubsection*{Fixed Effects in the Training Model}

Posterior estimates obtained from the training period confirm the temporal
robustness of the main structural predictors. Communicability curvature remains
strongly negative and statistically significant
($\text{mean} = -0.39$, 95\% CI $[-0.48, -0.30]$), even when estimated exclusively
from data preceding the prediction year.

This result indicates that curvature captures persistent structural properties of
the urban contact network that generalize beyond the fitting period, rather than
merely reflecting contemporaneous dengue incidence. The stability of the curvature
effect under temporal extrapolation provides direct evidence that it encodes
forward-looking information relevant to epidemic risk.

Seasonal forcing remains a dominant climatic driver, with the rainy season
indicator exhibiting a strong and well-identified positive effect. Lagged
precipitation retains the expected sign but displays increased uncertainty in the
out-of-sample setting, a behavior commonly observed in predictive epidemic models
once seasonality is explicitly controlled. Population density does not exhibit a
statistically significant effect after accounting for network structure, further
supporting the interpretation that curvature is not acting as a proxy for
demographic scale.

\subsubsection*{Spatial Structure under Forecasting}

In contrast to the full-sample explanatory model, the predictive training model
exhibits a moderate structured spatial component. The BYM2 mixing parameter
$\phi \approx 0.40$ indicates that adjacency-based spatial smoothing contributes to
predictive stability when extrapolating beyond the observed period.

This behavior is consistent with the role of CAR components as latent regularizers
in forecasting contexts. Importantly, the persistence of a strong curvature effect
alongside a non-negligible CAR contribution highlights their complementary roles:
functional connectivity explains a substantial share of spatial heterogeneity,
while geometric contiguity provides additional stabilization under temporal
extrapolation.

\subsubsection*{Predictive Accuracy for 2024}

Figure~\ref{fig:obs_vs_pred_2024} compares observed dengue incidence in 2024 with
posterior mean predictions generated by the model trained up to 2023. The dashed
line represents perfect agreement ($y = \hat{y}$), while point colors indicate
communicability curvature values.

\begin{figure}
\centering
\includegraphics[width=0.8\linewidth]{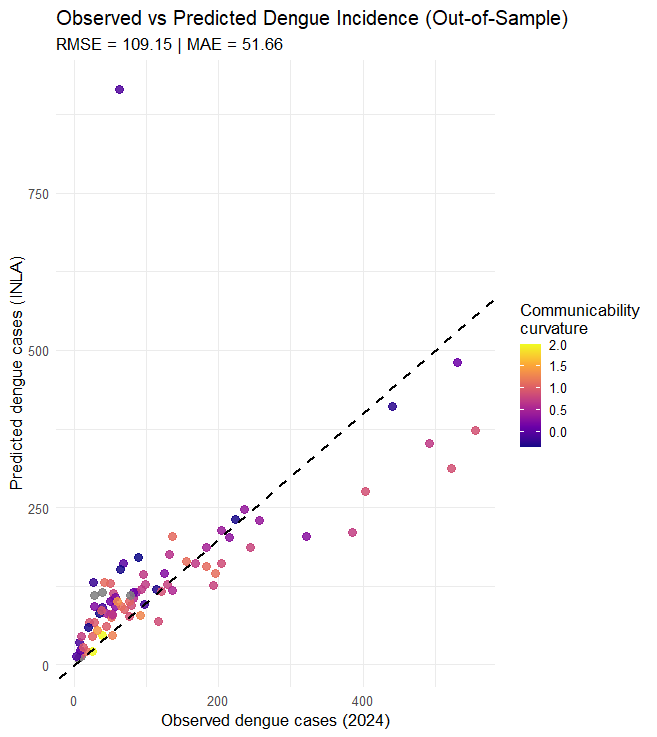}
\caption{Observed versus predicted dengue incidence for 2024 using the INLA/BYM2
model trained up to 2023. Point colors represent communicability curvature.}
\label{fig:obs_vs_pred_2024}
\end{figure}

Overall predictive accuracy is satisfactory given the high heterogeneity of urban
dengue dynamics, with a root mean squared error of
$\text{RMSE} = 57.2$ cases and a mean absolute error of $\text{MAE} = 40.1$ cases.
Deviations are primarily associated with neighborhoods exhibiting extreme
incidence levels, a well-known challenge in count-based epidemic forecasting.

Figure~\ref{fig:error_map_2024} illustrates the spatial distribution of absolute
prediction errors across neighborhoods.

\begin{figure*}
\centering
\includegraphics[width=0.5\linewidth]{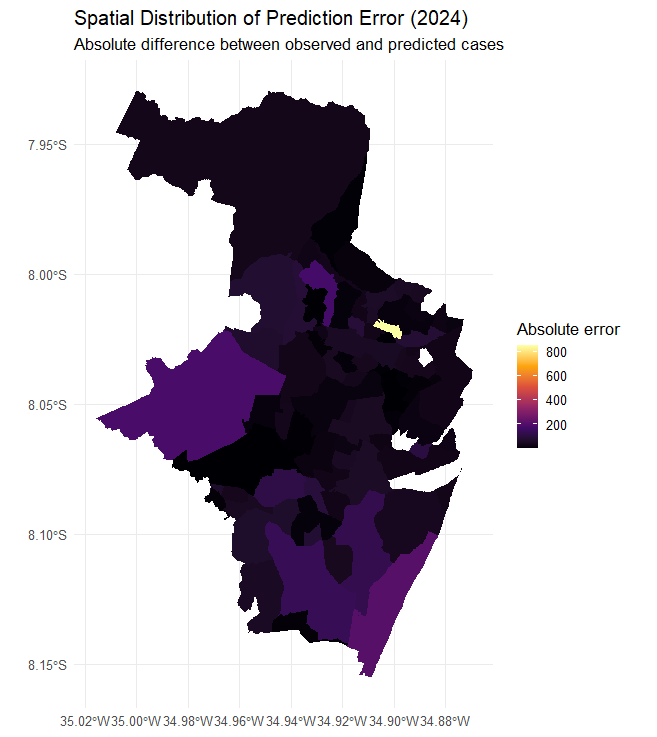}
\caption{Spatial distribution of absolute prediction error for dengue incidence
in 2024. Higher errors are concentrated in neighborhoods with extreme observed
counts, while most areas exhibit moderate deviations.}
\label{fig:error_map_2024}
\end{figure*}

\subsubsection*{Implications for Predictive Validity}

The persistence of communicability curvature as a strong predictor in a strict
out-of-sample framework provides direct evidence against circularity concerns.
Curvature encodes stable structural features of the urban network that anticipate
future risk patterns, rather than merely summarizing past outcomes.

Taken together, these results demonstrate that communicability curvature is not
only explanatory but also predictive, supporting its use in forward-looking
spatial epidemiology and risk-oriented territorial planning.

\section{Discussion}

The results of this study consistently indicate that the functional connectivity of the urban fabric, as captured by mean communicability curvature, constitutes a robust structural determinant of dengue incidence in the Metropolitan Region of Recife. This evidence emerges stably across different econometric and spatial specifications, suggesting that dengue propagation dynamics are not adequately described solely by geographic contiguity relationships, but rather reflect a broader functional organization of urban space.

This interpretation is consistent with recent evidence in the literature highlighting the central role of human mobility in the diffusion of vector-borne diseases in urban environments. Massaro et al.~\cite{Massaro2019}, for instance, demonstrate that given the limited flight range of \textit{Aedes aegypti}, it is structured human movements—rather than random mobility—that connect different regions of a city and enable the introduction of the virus into previously unaffected areas. Our results are compatible with this mechanism, insofar as communicability curvature identifies precisely those regions capable of radiating influence through multiple redundant paths of the road network, thereby functioning as structural dispersers of transmission.

Additional empirical evidence reinforces this dissociation between local vector proximity and epidemiological risk. Honório et al.~\cite{honorio2009}, in an entomological--serological study conducted in Rio de Janeiro, show that recent dengue infection risk is not significantly associated with household vector infestation indices. Instead, areas characterized by intense human circulation and high functional connectivity exhibit elevated transmission risk, even under conditions of low local vector abundance. These findings help explain why, in our study, spatial dependence based exclusively on territorial adjacency loses relevance once structural metrics of the urban network are incorporated into the model.

Under this perspective, epidemiological connectivity between localities is not established solely through immediate spatial neighborhood, but through a statistical coupling between the temporal dynamics of incidence mediated by preferential patterns of human mobility. Localities separated by distances exceeding the typical flight range of the vector may nonetheless exhibit strong temporal correlation in case counts, reflecting functionally mediated connectivity induced by urban flows. By defining edges in the epidemic graph based on significant correlations between incidence time series, the present study empirically operationalizes this concept of functional connectivity.

Similar approaches have been proposed in the recent literature. De Souza et al.~\cite{de2021using} construct epidemic networks based on temporal synchronization between localities and demonstrate that structural metrics of these networks are capable of capturing systemic transitions associated with outbreak emergence and intensification. Our results extend this perspective by showing that communicability curvature provides a geometric and structural measure of such synchronization, enabling the identification not only of correlated regions, but also of those that play a disproportionate role in sustaining global epidemic connectivity.

From a theoretical standpoint, this interpretation is reinforced by the properties of communicability in complex networks. As discussed by Estrada et al.~\cite{estrada2012}, communicability incorporates the joint contribution of all walks connecting pairs of nodes, assigning greater weight to shorter trajectories while explicitly including indirect and longer paths. This formulation can be interpreted as a diffusive propagator, formally analogous to a Green’s function, capable of capturing multiscale connectivity effects that are not accessible through metrics based solely on adjacency or shortest-path distance. The curvature associated with communicability, in turn, quantifies the structural importance of edges and regions to network integrity, providing a geometric basis for identifying critical corridors of diffusion.

This structural interpretation also helps explain the absence of residual spatial autocorrelation effects in the BYM2 model. The estimate of $\phi \approx 0$ indicates that spatially structured variability traditionally captured by adjacency-based random effects is largely absorbed by the covariates, particularly by the curvature metric. In other words, the spatial dependence relevant to dengue transmission in Recife is not primarily local or homogeneous, but rather emerges from the functional organization of the urban network and its multiple interconnecting pathways.

Evidence from mobility and metapopulation models in other epidemiological contexts further corroborates this interpretation. Chinazzi et al.~\cite{chinazzi2020} show that in the case of COVID-19, severe mobility restrictions had limited impact on the temporal progression of the epidemic once the disease had been seeded in multiple localities, resulting in highly synchronized growth patterns across geographically distant regions. These results reinforce the idea that epidemic propagation is governed by functional connectivity structures, in which temporal similarity between regions provides a more informative indicator of diffusion than static spatial adjacency.

Taken together, these findings suggest that incorporating structural metrics of urban connectivity, such as communicability curvature, not only improves the statistical performance of epidemiological models but also offers a conceptually more appropriate interpretation of dengue dynamics in complex urban environments. By integrating network theory, hierarchical spatial modeling, and high-resolution epidemiological data, the present study contributes to a deeper understanding of dengue diffusion mechanisms and opens new avenues for territorial surveillance and public-health intervention strategies.

\section{Limitations of the study}

This study presents limitations that should be considered when interpreting the results. First, although the model incorporates relevant environmental and socioeconomic variables, other potentially important urban factors—such as actual human mobility flows, intradomiciliary characteristics, and microenvironments favorable to the vector—were not explicitly included, either due to the lack of systematic data or to operational constraints inherent to large-scale epidemiological surveillance.

Additionally, the spatial location of cases was represented using the centroid of the street segment associated with the individual’s address. This choice constitutes an approximation of the location where transmission may have occurred, since the exact point of infection is rarely observable in observational epidemiological studies. Nevertheless, this approximation is consistent with the adopted scale of analysis and with the typical spatial range of the \textit{Aedes aegypti} vector, and represents the most feasible alternative for constructing a coherent contamination network from administrative records. By anchoring cases to linear units of the road network, this approach preserves the functional structure of urban space and enables the inference of epidemiological connectivity patterns that are compatible with the underlying dynamics of human circulation.

The results are also conditioned on the spatial scale adopted. Analysis at the neighborhood level is consistent with the organization of public health surveillance and with decision-making processes in public policy; however, finer spatial scales—such as blocks or street segments—may reveal additional heterogeneities in diffusion patterns, albeit at the cost of increased statistical uncertainty and higher computational demands. This issue is closely related to the well-known Modifiable Areal Unit Problem (MAUP) in spatial analysis.

Finally, the construction of the functional network depends on parametric choices, including the spatial radius used to define potential connections between localities and the correlation threshold applied to incidence time series. Although these parameters were defined based on biological plausibility, empirical evidence, and robustness considerations, alternative specifications may induce variations in network topology and, consequently, in the derived structural metrics. Accordingly, the results should be interpreted as structural descriptors conditioned on these choices rather than as invariant properties of the underlying epidemiological dynamics.

\section{Future research directions}

The framework proposed in this study opens several promising avenues for future research. One natural extension concerns the expansion of the epidemiological database to include other arboviral diseases, such as Zika and chikungunya. Given the shared vectors, transmission mechanisms, and urban ecological conditions, such an extension would enhance the reproducibility and generality of the results while enabling comparative analyses across diseases with similar diffusion dynamics.

Another important direction involves the explicit integration of human mobility into the modeling framework. Origin--destination data, when available, may be used to construct proxies of average urban mobility patterns, allowing for a more direct representation of mobility-mediated coupling between locations. Methodologically, this would enable the separation of connectivity driven by local transmission processes from that induced by broader circulation patterns. At the same time, communicability itself may already act as an implicit proxy for urban mobility, as it is defined through the meso-urban coupling of locations via multiple redundant paths in the road network. Investigating the relationship between communicability-based connectivity and observed mobility patterns constitutes an open and relevant research problem.

Future work may also incorporate environmental variables at higher spatial and temporal resolutions, such as microclimatic conditions, land-use characteristics, and fine-scale urban morphology, to further refine risk estimation and capture local transmission heterogeneities. Additionally, extending the framework to continuous spatial formulations—such as stochastic partial differential equation (SPDE) models—would allow the construction of smooth risk surfaces and fully spatiotemporal forecasts, enhancing its applicability for real-time surveillance.

More broadly, the methodological approach developed here is not intrinsically limited to dengue epidemiology. Processes such as crime diffusion, traffic congestion, and the spread of other urban phenomena share structural similarities, in which functional connectivity often plays a more prominent role than strict geographic contiguity. Applying and validating the proposed framework across different urban systems and domains represents a promising direction for advancing the study of diffusion processes in complex cities.

\section{Conclusion}

This study demonstrated that the mean communicability curvature of the urban road network is a robust structural determinant of dengue incidence in the Metropolitan Region of Recife. Through an integrated methodological strategy—combining count models, fixed effects, classical spatial specifications (SAR/SAC), and hierarchical Bayesian modeling (INLA/BYM2)—we showed that the functional geometry of the city plays a central role in shaping epidemiological risk.

The main contribution of this work lies in the consistent evidence that urban network connectivity, measured through communicability curvature, captures essential aspects of spatial diffusion that are not accounted for by traditional models based on geographic contiguity. The substantial reduction of the structured component in the BYM2 model ($\phi \approx 0$) following the inclusion of communicability curvature suggests that this metric captures a large share of the spatial structure present in the data. Rather than implying the absence of spatial dependence, this result indicates that spatial variation traditionally attributed to generic proximity or geometric contiguity is largely absorbed once a curvature-based representation of functional connectivity is introduced. In this sense, the curvature term appears to provide a more informative parametrization of spatial structure in the urban context.

Environmental and socioeconomic covariates—such as precipitation, seasonality, population density, and inequality—remained coherent and epidemiologically plausible across all model specifications. However, the inclusion of curvature consistently provided the largest explanatory gain, indicating that network-based structural metrics should occupy a central position in spatial analyses of arboviral diseases in complex urban environments.

From an applied perspective, these findings offer a solid foundation for the development of more precise public-health surveillance and intervention tools. By explicitly incorporating the structure of the urban road network into epidemiological models, it becomes possible to identify structurally emissive neighborhoods, detect diffusion corridors, and refine territorial risk mapping to support targeted prevention and control strategies.

Taken together, this study shows that integrating structural metrics of urban networks not only improves statistical model performance but also deepens our understanding of dengue transmission dynamics, offering a conceptually and methodologically innovative perspective for spatial epidemiology in urban contexts.

\section{Data and Code Availability}

The geocoded dengue dataset used in this study is publicly available on Zenodo \cite{dataset_recife_2025}. 

All scripts used for data cleaning, graph construction, communicability curvature computation, spatial modeling (Negative Binomial, SAR/SAC, and INLA/BYM2), and figure 
generation are available in an open GitHub repository \cite{github_curvature_code}.

\nocite{*}

\appendix
\section{Robustness and Sensitivity Analyses}
\label{app:robustness}

This appendix evaluates the robustness of the proposed communicability curvature
with respect to modeling assumptions, spatial scale, and potential sources of
bias. The analyses address concerns regarding cutoff selection, spatial
dependence, circularity, and predictive validity, ensuring that the main results
are not artifacts of arbitrary choices or methodological limitations.

\subsection{Conceptual Role of Communicability Curvature}

Communicability curvature, originally introduced by Estrada~\cite{estrada2012},
quantifies the redundancy of alternative paths connecting vertices in a graph by
aggregating both geodesic and non-geodesic walks. This property makes it
particularly suitable for describing diffusive processes in complex networks.

When transposed to an urban epidemiological setting, the underlying graph does
not arise directly from physical geometry. Urban systems involve overlapping
layers of interaction—spatial, functional, and social—rendering pure geometric
contiguity insufficient to represent transmission pathways. In this context,
communicability curvature should be interpreted as an intermediate structural
descriptor, situated between strict spatial proximity and fully explicit
mobility-based connectivity.

While human mobility is widely recognized as a key driver of epidemic spread,
distance remains a natural attenuating factor for local transmission. Motivated
by this balance, we explicitly restrict potential interactions using combined
temporal and spatial criteria, capturing relevant diffusion mechanisms without
introducing spurious long-range links.

\subsection{Network Construction and Spatial Cutoffs}

Functional connectivity was defined using two complementary criteria:
(i) temporal correlation between dengue incidence series and
(ii) geographic proximity.

Vertices were connected when their correlation satisfied $\rho \geq 0.4$ and the
distance between centroids was below a spatial cutoff of either 600~m or 800~m.
The correlation threshold captures coevolution of temporal dynamics rather than
absolute similarity in incidence levels, allowing regions with distinct burdens
to belong to the same functional structure if their temporal patterns align.

The spatial cutoffs are epidemiologically grounded. The typical flight radius of
\emph{Aedes aegypti} ranges from 100 to 200~m, while interactions among nearby
clusters may occur at larger scales when mediated by urban mobility. Under this
framework, communicability curvature approximates the structural role of mobility
as a diffusive mechanism linking spatially separated regions.

Figure~\ref{fig:comparacao_600_800} contrasts the resulting functional networks.
At 600~m, a giant component emerges with high redundancy and limited densification.
At 800~m, the network becomes denser, partially reintroducing classical contiguity
effects while preserving functional connectivity.

\begin{figure*}
\centering
\begin{minipage}[t]{0.48\linewidth}
    \centering
    \includegraphics[scale=0.27]{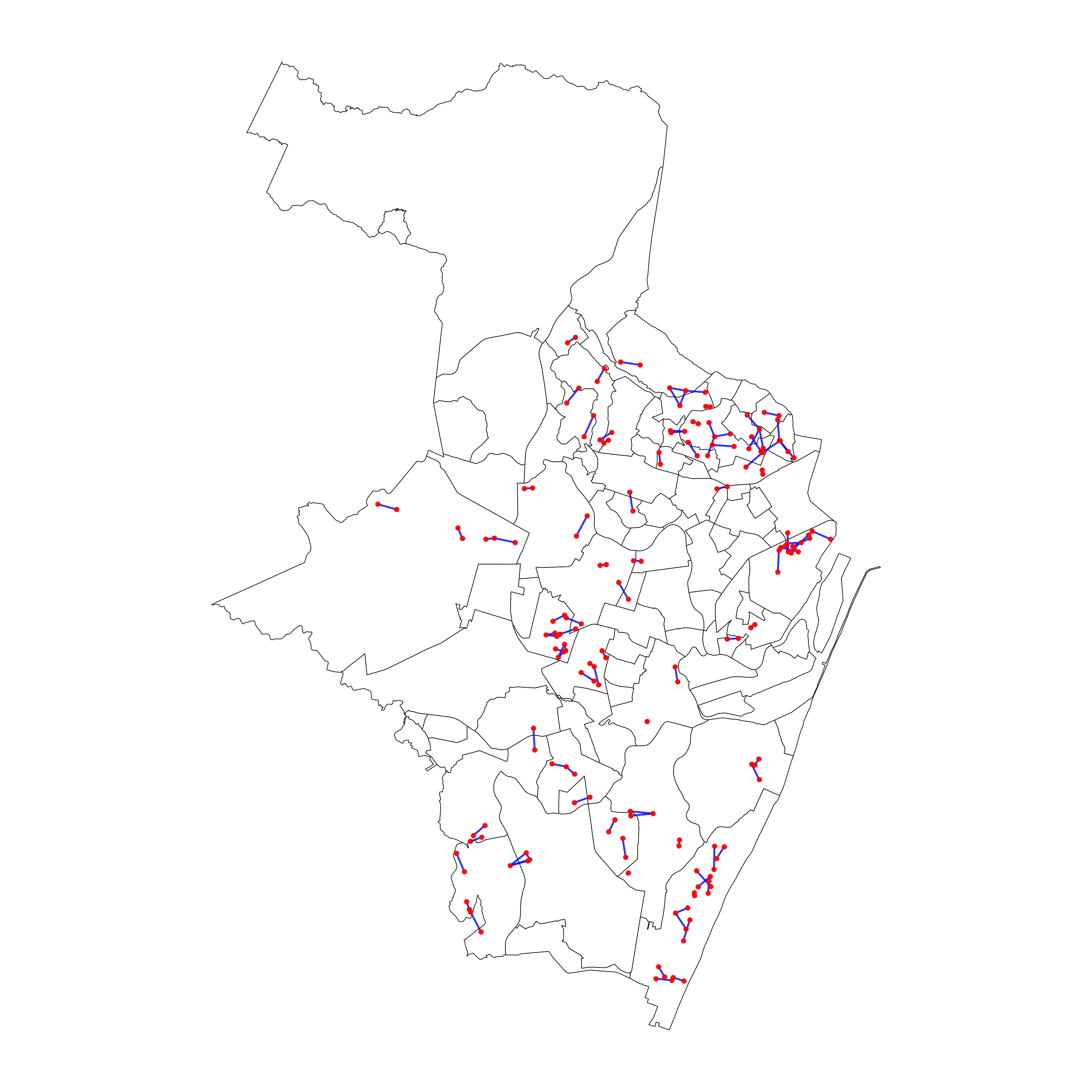}
    \caption*{(a) 600 m — functional connectivity dominant}
\end{minipage}
\hfill
\begin{minipage}[t]{0.48\linewidth}
    \centering
    \includegraphics[scale=0.27]{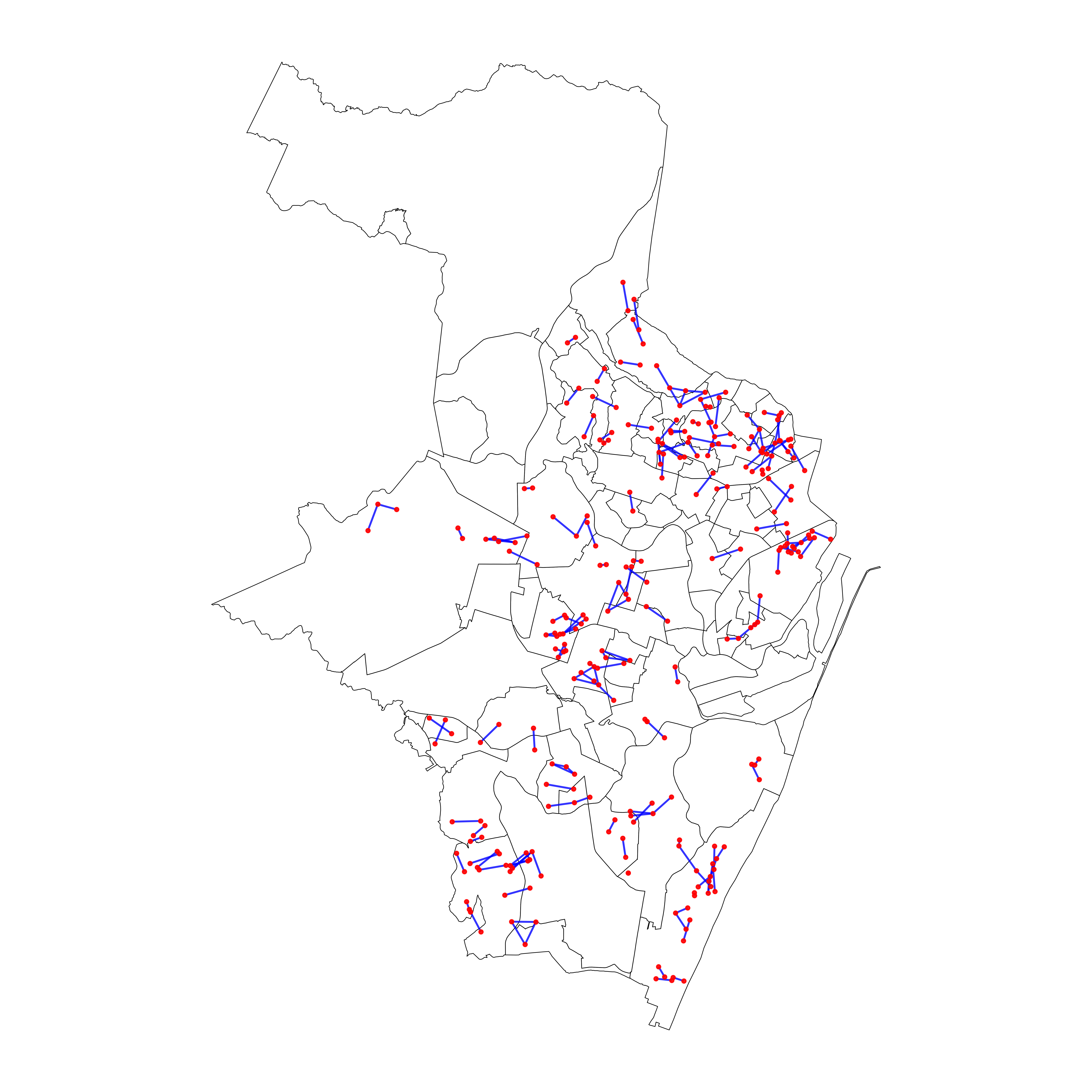}
    \caption*{(b) 800 m — expanded meso-urban regime}
\end{minipage}
\caption{Structural comparison of functional networks under spatial cutoffs of
600~m and 800~m, both with correlation threshold $\rho \geq 0.4$.}
\label{fig:comparacao_600_800}
\end{figure*}

\subsection{Sensitivity to Spatial Scale}

Communicability curvature was computed on networks induced by both spatial
cutoffs. Across SAR and SAC specifications, the sign, magnitude, and statistical
significance of the curvature coefficient remained highly stable.

This stability indicates that the observed association between curvature and
dengue incidence is not an artifact of a particular cutoff choice but reflects a
robust structural property of the urban functional network. In all cases, higher
curvature—indicating greater redundancy of alternative pathways—was associated
with lower dengue incidence.

Table~\ref{tab:resumo_600_800} summarizes the main comparative results.

\begin{table*}
\centering
\caption{Comparative summary of results under different spatial cutoffs.}
\label{tab:resumo_600_800}
\begin{tabular}{lcc}
\hline
 & \textbf{600 m} & \textbf{800 m} \\
\hline
Curvature coefficient & $\approx -55.7$ & $\approx -55.2$ \\
Curvature significance & $p < 10^{-4}$ & $p < 10^{-4}$ \\
Signal stability & High & High \\
SAR spatial lag ($\rho$) & Weak / marginal & Significant \\
SAC residual effect ($\lambda$) & Moderate & Moderate \\
Overall fit & Excellent & Similar \\
Dominant interpretation & Functional connectivity & Functional + contiguity \\
\hline
\end{tabular}
\end{table*}

\section{Spectral and percolation results supporting the communicability conjecture}
\label{app:spectral_support}

This appendix aims to provide theoretical support for
Conjecture~\ref{conj:critical}. No formal proof is attempted. Instead, we
collect classical results from matrix analysis, spectral graph theory, and
percolation theory that render the conjecture mathematically plausible.

The role of this appendix is epistemological rather than deductive: the goal is
to show that the conjecture arises naturally from the intersection of
well-established spectral phenomena, rather than constituting an \emph{ad hoc}
assumption introduced to rationalize empirical findings.

\subsection{The matrix exponential as a multiscale diffusive operator}

The matrix exponential $e^{\beta A}$ is a classical object in linear systems
theory, functional analysis, and numerical analysis. It is well known that
$e^{\beta A}$ acts as a diffusion operator on graphs, aggregating walks of all
lengths with exponentially decaying weights, while fully preserving the
spectral information of the adjacency matrix.

From a functional-analytic perspective, $e^{\beta A}$ corresponds to the
fundamental solution of the linear system
\[
\frac{dX(t)}{dt} = \beta A X(t),
\]
and therefore constitutes a natural propagator of influence, information, or
flow on networks. A comprehensive discussion of the spectral and numerical
properties of the matrix exponential can be found in Higham~\cite{higham2008}.

In the context of complex networks, Estrada and Hatano~\cite{estrada2012} and,
subsequently, Estrada~\cite{estrada2025} formalized communicability as a
multiscale measure of accessibility, capable of interpolating between local
graph structure and global connectivity. A central aspect of this construction
is that communicability depends on the entire spectrum of $A$, rather than only
on its largest eigenvalue.

This property is essential for understanding why communicability-based metrics
exhibit qualitative behaviors distinct from purely local measures or those
based solely on shortest paths.

\subsection{Spectral concentration in dense regimes}

A classical result from spectral graph theory and random matrix theory is that,
as a graph becomes denser, its spectrum tends to concentrate.

For large random or quasi-random graphs, Füredi and Komlós~\cite{furedi1981}
showed that the largest eigenvalue separates from the bulk of the spectrum,
while the remaining eigenvalues concentrate around a mean value. Related
results appear in the work of Chung, Lu, and Vu~\cite{chung2003}, as well as in
subsequent developments in random matrix theory.

This phenomenon—often described as \emph{spectral dominance} or
\emph{effective rank collapse}—implies that expressions of the form
\[
\sum_{k=1}^n e^{\beta \lambda_k}
\]
become dominated by the contribution associated with the largest eigenvalue in
dense or supercritical regimes.

As a consequence, the addition of edges beyond a certain threshold yields
diminishing marginal gains in normalized communicability measures, since
spectral diversity decreases even as global connectivity continues to increase.

\subsection{Percolation, structural transitions, and spectral diversity}

Percolation theory establishes that many structural and spectral properties of
graphs undergo abrupt transitions near critical thresholds. Classical results
show that the emergence of a giant component marks a qualitative change in both
connectivity and diffusion properties of the network.

Near the percolation threshold, graphs typically exhibit:
\begin{itemize}[label=--]
    \item the coexistence of multiple large components that are not yet fully
    merged;
    \item high structural heterogeneity and path diversity;
    \item maximal sensitivity of global connectivity to local perturbations.
\end{itemize}

From a spectral perspective, this regime is characterized by a rapid growth of
the largest eigenvalue without an immediate concentration of the remaining
spectrum. In other words, spectral diversity is temporarily preserved before
collapsing in denser regimes.

This balance between increasing connectivity and the preservation of structural
diversity is widely recognized as a privileged regime for diffusive and
epidemic processes.

\subsection{Conceptual synthesis and interpretation of the conjecture}

Conjecture~\ref{conj:critical} should be interpreted as a structural synthesis
of the phenomena described above.

In subcritical regimes, average communicability is limited by network
fragmentation and the absence of long-range connectivity. In strongly
supercritical regimes, communicability becomes dominated by a small number of
spectral modes, reducing the marginal contribution of new paths once normalized
by system size.

In the vicinity of the percolation threshold, by contrast, connectivity, path
redundancy, and spectral diversity coexist. This combination provides a
mathematically plausible mechanism through which average communicability,
defined via the matrix exponential, can attain a maximum.

We emphasize that this conjecture is not required to support any of the
empirical or inferential results presented in the main body of the article. Its
purpose is to provide theoretical context and structural intuition, situating
the observed behavior of communicability-based metrics within a consolidated
mathematical framework.

A formal proof, including precise assumptions on the correlation thresholding
process, spatial constraints, and rates of spectral concentration, is
explicitly left as future work.

\end{document}